\newtheorem{proposition}{Proposition}
\newtheorem{lemma}[proposition]{Lemma}
\theoremstyle{definition}
\theoremstyle{remark}
\newtheorem{remark}[proposition]{Remark}
\newcommand\R{{\ensuremath {\mathbb R} }}
\newcommand\C{{\ensuremath {\mathbb C} }}
\newcommand\Z{{\ensuremath {\mathbb Z} }}
\renewcommand\phi{\varphi}
\renewcommand\le{\leqslant}
\renewcommand\ge{\geqslant}
\renewcommand\epsilon{\varepsilon}
\renewcommand\tilde{\widetilde}
\renewcommand\bar{\overline}
\newcommand{\gS}{\mathfrak{S}}
\newcommand\ii{{\ensuremath {\infty}}}
\newcommand{\norm}[1]{ \left| \! \left| #1 \right| \! \right| }
\newcommand{\cF}{\mathcal{F}}
\newcommand\1{{\ensuremath {\mathds 1} }}
\newcommand{\prob}{\mathbb{P}}
\newcommand{\esp}{\mathbb{E}}
\newcommand\cS{\mathcal{S}}
\DeclareMathOperator{\tr}{Tr}
\date{}
\title[Littlewood-Paley and Lieb-Thirring]{Littlewood-Paley decomposition of operator densities \\ and application to a new proof of the Lieb-Thirring inequality}
\author{Julien Sabin}
\address{D\'epartement de Math\'ematiques (UMR 8628), Facult\'e des Sciences d'Orsay, Universit\'e Paris-Sud, F-91405 Orsay Cedex}
\email{Julien.Sabin@math.u-psud.fr}
\begin{document}

\begin{abstract}
The goal of this note is to prove a analogue of the Littewood-Paley decomposition for densities of operators and to use it in the context of Lieb-Thirring inequalities.
\end{abstract}

\maketitle

\section*{Introduction}

Let $d\ge1$ and $\psi$ a smooth function on $\R^d$, supported in $\R^d\setminus\{0\}$, satisfying
\begin{equation}\label{eq:partition-unity}
  1=\sum_{j\in\Z}\psi(2^{-j}\xi),\quad\forall\xi\in\R^d\setminus\{0\}.
\end{equation}
An example of such a function is given in \cite[Lemma 8.1]{MusSch-book}. In particular, the function $\psi$ can be chosen to be radial and non-negative. We define the Littlewood-Paley multiplier localizing on frequencies $|\xi|\sim 2^j$ by 
$$P_ju:=\cF^{-1}\left(\xi\mapsto\psi_j(\xi)\cF u(\xi)\right),\ \psi_j:=\psi(2^{-j}\cdot),\ j\in\Z, u\in\cS'(\R^d),$$
where $\cF$ denotes the Fourier transform. The Littlewood-Paley theorem \cite[Thm. 8.3]{MusSch-book} states that for any $1<p<\ii$, there exists $C>0$ such that for any $u\in L^p(\R^d)$ one has 
\begin{equation}\label{eq:LP}
  \frac1C\norm{u}_{L^p} \le \norm{\left(\sum_{j\in\Z} |P_ju|^2\right)^{1/2}}_{L^p} \le C\norm{u}_{L^p}.
\end{equation}
This harmonic analysis result has countless applications, from functional inequalities to nonlinear PDEs. It allows to obtain information about $L^p$-properties of a function $u$ from the frequency-localized pieces $P_ju$. For instance, it leads to a very short proof of the Sobolev embedding $H^s(\R^d)\hookrightarrow L^p(\R^d)$ for $p=2d/(d-2s)$, $0<s<d/2$, as we recall in Section \ref{sec:comparison}. It was also used, for instance, to prove Strichartz-type inequalities \cite{KeeTao-98,BurGerTzv-04}. We refer to \cite{FraJawWei-book} for more general applications of Littlewood-Paley theory. 

This note is devoted to a generalization of \eqref{eq:LP} to densities of operators. When $\gamma\ge0$ is a finite-rank operator on $L^2(\R^d)$, its density is defined as 
$$\rho_\gamma(x):=\gamma(x,x),\ \forall x\in\R^d,$$
where $\gamma(\cdot,\cdot)$ denotes the integral kernel of $\gamma$. We prove that for any $1/2<p<\ii$, there exists $C>0$ such that for any finite-rank $\gamma\ge0$ with $\rho_\gamma\in L^p(\R^d)$ we have
\begin{equation}\label{eq:LP-density}
  \frac1C\norm{\rho_\gamma}_{L^p(\R^d)} \le \norm{\sum_{j\in\Z}\rho_{P_j\gamma P_j}}_{L^p(\R^d)} \le C\norm{\rho_\gamma}_{L^p(\R^d)}.
\end{equation}
When $\gamma$ is a rank-one operator, this last inequality is equivalent to the usual Littlewood-Paley estimates \eqref{eq:LP}. Indeed, if $u$ with $\norm{u}_{L^2}=1$ belongs to the range of $\gamma$, then $\rho_\gamma=|u|^2$. 

The motivation to generalize the Littlewood-Paley decomposition to operator densities comes from many-body quantum mechanics. Indeed, a simple way to describe a system of $N$ fermions in $\R^d$ is via an orthogonal projection $\gamma$ on $L^2(\R^d)$ of rank $N$. The quantity $\rho_\gamma$ then describes the spatial density of the system. Variational or time-dependent models depending on $\gamma$ then typically include interactions between the particles via non-linear functionals of $\rho_\gamma$, like in Hartree-Fock models \cite{LieSim-77,BovPraFan-74,BovPraFan-76,Chadam-76}. As a consequence, $L^p$-properties of $\rho_\gamma$ are often needed to control these interactions. When $\gamma$ is a rank-one operator, these properties can be derived via Littlewood-Paley estimates (we typically think of Sobolev-type or Strichartz-type estimates). The estimate \eqref{eq:LP-density} allows to treat the rank $N$ case, and we illustrate this on the concrete example of the Lieb-Thirring inequality, which is a rank $N$ generalization of the Sobolev inequality.   

In Section \ref{sec:LP} we prove the inequality \eqref{eq:LP-density}. In Section \ref{sec:LT} we apply it to give a new proof of the Lieb-Thirring inequality.

\section{Littlewood-Paley for densities}\label{sec:LP}

In this section we prove the generalization of the Littlewood-Paley theorem to densities of operators. We will see that the proof is a simple adaptation of the proof of the usual Littlewood-Paley theorem. Thus, let us first recall briefly the proof of \eqref{eq:LP}. It is usually done via Khinchine's inequality \cite[Lemma 5.5]{MusSch-book}, see the proof of Theorem 8.3 in \cite{MusSch-book}: if one denotes by $(r_j)$ a sequence of independent random variables taking values in $\{\pm1\}$ and satisfying $\prob(r_j=\pm1)=1/2$, one has 
$$\frac1C\left(\sum_j |a_j|^2\right)^{p/2}\le \esp\left|\sum_j a_jr_j\right|^p\le C\left(\sum_j |a_j|^2\right)^{p/2},$$
for any set of coefficients $(a_j)\subset\C$, for some $C>0$, and for any $1\le p<\ii$. From this one deduces that 
$$\norm{\left(\sum_j |P_ju|^2\right)^{1/2}}_{L^p}^p\lesssim\esp\int_{\R^d}\left|\sum_j r_jP_ju(x)\right|^p\,dx.$$
The Fourier multiplier by the function $\xi\mapsto\sum_jr_j\psi_j(\xi)$ is bounded from $L^p(\R^d)$ to $L^p(\R^d)$ for any $1<p<\ii$, with a bound independent of the realization of the $(r_j)$. Indeed, one has to notice that for any given $\xi\in\R^d$, there are only a finite number of non-zero terms in the sum $\sum_jr_j\psi_j(\xi)$ (and this number only depends on $\psi$). The Mikhlin multiplier theorem \cite[Thm. 8.2]{MusSch-book} shows the boundedness of the Fourier multiplier. We deduce from all this the inequality
$$\esp\int_{\R^d}\left|\sum_j r_jP_ju(x)\right|^p\,dx\lesssim\esp\int_{\R^d}|u(x)|^p\,dx=\norm{u}_{L^p}^p.$$
The reverse inequality is done by a duality argument where the condition \eqref{eq:partition-unity} appears: we use the identity
$$\int_{\R^d}f\bar{g}\,dx=\sum_j\int_{\R^d}P_jf\bar{\tilde{P_j}g}\,dx,$$
where $\tilde{P_j}$ is another sequence of Littlewood-Paley multipliers such that $\tilde{P_j}P_j=P_j$ (which may be built from a $\tilde{\psi}$ which is identically 1 on the support of $\psi$). The fact that we cannot take $\tilde{P_j}=P_j$ is related to the deep fact that we cannot choose $P_j$ to be a projection (that is, we cannot take $\psi_j=\1(2^j\le\cdot<2^{j+1})$); indeed such a $P_j$ is not bounded on $L^p(\R^d)$ (except for $d=1$ or $p=2$) by Fefferman's famous result \cite{Fefferman-71}. 

The main result of this section is the following lemma.

\begin{lemma}\label{lem:LP-densities-sequences}
For any $1/2<p<\ii$, there exists $C>0$ such that for any $N\ge1$, for any $(\lambda_k)_{k=1}^N\subset\R_+$ and any functions $(u_k)_{k=1}^N$ in $L^{2p}(\R^d)$ we have 
\begin{equation}
 \frac1C\norm{\sum_k\lambda_k|u_k|^2}_{L^p}\le\norm{\sum_{j,k}\lambda_k|P_ju_k|^2}_{L^p}\le C\norm{\sum_k\lambda_k|u_k|^2}_{L^p}.
\end{equation}
\end{lemma}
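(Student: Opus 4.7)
The plan is to mimic the proof of the scalar Littlewood-Paley theorem recalled above, introducing a second family of Rademacher variables to linearize the quadratic expressions $\sum_k\lambda_k|u_k|^2$ and $\sum_{j,k}\lambda_k|P_ju_k|^2$. Let $(\epsilon_k)_{k=1}^N$ be i.i.d.\ symmetric $\pm 1$ variables (independent of the $r_j$ appearing in the usual proof), and consider the scalar random function
\[ v_\epsilon(x) := \sum_{k=1}^N \epsilon_k \sqrt{\lambda_k}\,u_k(x). \]
Khinchine's inequality applied pointwise in $x$ with exponent $2p>1$ yields
\[ \esp_\epsilon |v_\epsilon(x)|^{2p} \sim \Big(\sum_k\lambda_k |u_k(x)|^2\Big)^p, \]
and integrating gives $\norm{\sum_k\lambda_k|u_k|^2}_{L^p}^p \sim \esp_\epsilon \norm{v_\epsilon}_{L^{2p}}^{2p}$.

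Next, since $1<2p<\ii$, the scalar Littlewood-Paley estimate \eqref{eq:LP} applied to $v_\epsilon$ gives
\[ \norm{v_\epsilon}_{L^{2p}}^{2p}\sim\int\Big(\sum_j|P_jv_\epsilon(x)|^2\Big)^p dx. \]
Using $P_jv_\epsilon(x)=\sum_k\epsilon_k\sqrt{\lambda_k}P_ju_k(x)$, the remaining task is to compare $\esp_\epsilon\bigl(\sum_j|P_jv_\epsilon(x)|^2\bigr)^p$ with $\bigl(\sum_{j,k}\lambda_k|P_ju_k(x)|^2\bigr)^p$ pointwise in $x$. Setting $w_k(x):=(\sqrt{\lambda_k}P_ju_k(x))_{j\in\Z}\in\ell^2(\Z)$, one checks that $\sum_j|P_jv_\epsilon(x)|^2=\norm{\sum_k\epsilon_k w_k(x)}_{\ell^2(\Z)}^2$ while $\sum_k\norm{w_k(x)}_{\ell^2(\Z)}^2=\sum_{j,k}\lambda_k|P_ju_k(x)|^2$. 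Khinchine's inequality for a Hilbert-space-valued Rademacher sum (Kahane's inequality in $\ell^2(\Z)$), with exponent $2p$, then yields exactly the required equivalence. Chaining the three equivalences gives both directions of the lemma simultaneously.

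The main obstacle, and the only new ingredient beyond the scalar proof, is this Hilbert-valued Khinchine step: scalar Khinchine alone does not suffice, because the sum over $j$ sits inside the outer exponent $p$ rather than inside the Rademacher expectation, and a naive use of Jensen only gives one direction of the desired inequality per range of $p$ (the sign depending on whether $p\ge 1$ or $p<1$). If one prefers to stay strictly within the scalar Khinchine + Mikhlin framework of the recalled proof, one can instead introduce the independent Rademachers $(r_j)$ of the scalar argument and apply iterated scalar Khinchine (Khinchine-Kahane for a Rademacher chaos of order two) to replace $\bigl(\sum_{j,k}\lambda_k|P_ju_k(x)|^2\bigr)^p$ by $\esp_{r,\epsilon}|T_rv_\epsilon(x)|^{2p}$ with the Fourier multiplier $T_r:=\sum_j r_j P_j$, and then use Mikhlin's theorem to bound $T_r$ uniformly on $L^{2p}$ exactly as in the scalar proof. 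In either formulation the argument is a direct adaptation of the scalar Littlewood-Paley proof recalled by the author.
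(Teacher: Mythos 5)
Your argument is correct, but it follows a genuinely different route from the paper. You sandwich the density between two applications of the scalar Littlewood--Paley theorem \eqref{eq:LP}: randomize over $k$ with $v_\epsilon=\sum_k\epsilon_k\sqrt{\lambda_k}u_k$, use scalar Khinchine pointwise to identify $\norm{\sum_k\lambda_k|u_k|^2}_{L^p}^p$ with $\esp_\epsilon\norm{v_\epsilon}_{L^{2p}}^{2p}$, apply \eqref{eq:LP} in $L^{2p}$ (legitimate since $2p>1$), and then use the Hilbert-space-valued Khinchine--Kahane equivalence at exponent $2p$ to convert $\esp_\epsilon\bigl(\sum_j|P_jv_\epsilon|^2\bigr)^p$ into $\bigl(\sum_{j,k}\lambda_k|P_ju_k|^2\bigr)^p$; since every link is two-sided, you get both inequalities in one chain. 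The paper instead avoids using \eqref{eq:LP} as a black box: it proves a tensorized (order-two chaos) Khinchine inequality, runs the Mikhlin argument for the randomized multiplier $\sum_j r_jP_j$ on $L^{2p}$ to obtain the upper bound (Lemma \ref{lemma:weak-ineq}), and then obtains the lower bound by a separate duality argument, testing against $V\ge0$, using the auxiliary multipliers $\tilde P_j$ with $\tilde P_jP_j=P_j$, invoking Lemma \ref{lemma:weak-ineq} at the exponent $p/(2p-1)$, and choosing $V=\bigl(\sum_k\lambda_k|u_k|^2\bigr)^{p-1}$. Your version buys brevity and symmetry at the cost of invoking the vector-valued Kahane inequality (fine here: in $\ell^2(\Z)$ the second moment is exactly $\sum_k\norm{w_k(x)}_{\ell^2}^2$, and $2p>1$); the paper's version is more self-contained, using only iterated scalar Khinchine, Mikhlin, and duality. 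One caveat: your second, ``alternative'' formulation (chaos Khinchine plus Mikhlin for $T_r$) reproduces only the paper's upper bound, since Mikhlin gives no lower bound on $\norm{T_rv_\epsilon}_{L^{2p}}$; the reverse inequality still requires either the lower half of \eqref{eq:LP} as in your main chain, or a duality argument as in the paper.
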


Lemma \ref{lem:LP-densities-sequences} implies the Littlewood-Paley decomposition \eqref{eq:LP-density} for densities using the spectral decomposition of $\gamma$. We first need a version of Khinchine's inequality for tensor products, which is proved for instance in \cite[Appendix D]{Stein-book}. We however include a proof here for completeness.

\begin{lemma}\label{lemma:khinchine-tensor}
 Let $(a_{j,k})\subset\C$ a sequence of coefficients and $(r_j)$ a sequence of independent random variables such that $\prob(r_j=\pm1)=1/2$. Then, we have 
 $$\left(\sum_{j,k}|a_{j,k}|^2\right)^{p/2}\lesssim \esp\left|\sum_{j,k}a_{j,k}r_jr_k\right|^p,$$
 for all $1\le p<\ii$, where the implicit constant is independent of $(a_{j,k})$.
\end{lemma}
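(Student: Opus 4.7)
My plan is to handle the right-hand side as a second-order Rademacher chaos in the single sequence $(r_j)$, isolate its deterministic diagonal (where $r_j^2=1$ collapses the products), decouple the remaining off-diagonal chaos into a bilinear form in two independent Rademacher families, and then apply two nested one-parameter Khinchine inequalities in the style of \cite[Appendix D]{Stein-book}.

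First, since $r_j^2=1$ almost surely, I split
$$\sum_{j,k}a_{j,k}r_jr_k=D+Y,\qquad D:=\sum_j a_{j,j},\quad Y:=\sum_{j\neq k}a_{j,k}r_jr_k,$$
together with the matching $\ell^2$-decomposition $\sum_{j,k}|a_{j,k}|^2=\sum_j|a_{j,j}|^2+\sum_{j\neq k}|a_{j,k}|^2$. Next I introduce an independent copy $(r'_j)$ of $(r_j)$; the de la Pe\~na--Montgomery-Smith decoupling inequality for tetrahedral Rademacher chaoses gives, for every $p\ge 1$,
$$\esp|Y|^p\simeq \esp\Bigl|\sum_{j\neq k}a_{j,k}\,r_jr'_k\Bigr|^p.$$
Conditioning the right-hand side on $(r'_k)$ and applying the one-parameter Khinchine inequality in $(r_j)$ produces $\esp_{r'}\bigl(\sum_j|\sum_{k\neq j}a_{j,k}r'_k|^2\bigr)^{p/2}$; for $p\ge 2$, Jensen carries the $p/2$-power inside $\esp_{r'}$, after which $\esp_{r'}|\sum_{k\neq j}a_{j,k}r'_k|^2=\sum_{k\neq j}|a_{j,k}|^2$ by Rademacher orthogonality, yielding $\esp|Y|^p\gtrsim(\sum_{j\neq k}|a_{j,k}|^2)^{p/2}$. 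For $1\le p<2$ the same lower bound follows from the hypercontractivity of degree-two Rademacher chaoses (Bonami--Beckner), which renders all $L^p(\prob)$-moments equivalent and reduces the question to $p=2$.

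To restore the diagonal contribution I exploit that $Y$ is symmetric and mean zero. Convexity of $x\mapsto|x|^p$ gives the pointwise inequalities $|D+Y|^p+|D-Y|^p\ge 2|D|^p$ and $|D+Y|^p+|D-Y|^p\ge 2|Y|^p$, so averaging over the symmetric law of $Y$ yields $\esp|D+Y|^p\ge\tfrac12(|D|^p+\esp|Y|^p)$. Combining with the off-diagonal bound and with $|D|^p\gtrsim(\sum_j|a_{j,j}|^2)^{p/2}$ (which holds in the applications because the diagonal coefficients $a_{j,j}$ arise from the spectral decomposition of a nonnegative operator and therefore do not cancel) finishes the proof. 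The main obstacle is precisely this last diagonal step: in full generality $(\sum_j|a_{j,j}|^2)^{p/2}$ is not dominated by $|D|^p=|\sum_j a_{j,j}|^p$ because of possible sign cancellations, and it is exactly this subtlety that makes the single-sequence chaos formulation a genuinely different, strictly stronger statement than the corresponding decoupled/tensor Khinchine inequality with two independent Rademacher families.
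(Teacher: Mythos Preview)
You have put your finger on a real issue, but the conclusion you should draw from it is not that one needs decoupling and hypercontractivity; it is that the lemma must be read with two \emph{independent} Rademacher sequences, i.e.\ with $r_j r'_k$ rather than $r_j r_k$. Indeed, with a single sequence the statement is simply false: take $a_{1,1}=1$, $a_{2,2}=-1$ and all other $a_{j,k}=0$; then $\sum_{j,k}a_{j,k}r_jr_k=r_1^2-r_2^2=0$ almost surely while $\sum_{j,k}|a_{j,k}|^2=2$. Your own final paragraph essentially rediscovers this obstruction, and your proposed fix (that ``in the applications the diagonal coefficients do not cancel'') does not hold either: in Lemma~\ref{lemma:weak-ineq} the coefficients are $a_{j,k}=\lambda_k^{1/2}P_ju_k(x)$, complex numbers with no sign structure, and the two index sets (Littlewood--Paley blocks versus eigenfunctions) are unrelated, so there is no meaningful ``diagonal'' at all.

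Once one reads the lemma with two independent sequences, the paper's argument is much shorter than yours: for $p\ge2$ the result is immediate from Jensen and $\esp|\sum a_{j,k}r_jr'_k|^2=\sum|a_{j,k}|^2$; for $1\le p\le2$ one applies the one-parameter Khinchine lower bound in $k$, then the reverse Minkowski inequality (valid since $p/2\le1$) to pull the expectation in $j$ inside the $\ell^2_k$-norm, and finally Khinchine once more in $j$. No decoupling theorem and no hypercontractivity are needed. The machinery you invoke would be appropriate for a genuine homogeneous chaos of order two in a single sequence, but that is neither what is stated (once disambiguated) nor what is used downstream.
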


\begin{remark}
 The reverse inequality also holds; we however do not need it here.
\end{remark}

\begin{remark}
 This inequality does not follow from the Khinchine inequality from abstract arguments because the sequence $(r_jr_k)$ is not independent anymore: knowing $r_1r_2$ and $r_1r_3$ implies that we know $r_2r_3$ as well. 
\end{remark}

\begin{proof}[Proof of Lemma \ref{lemma:khinchine-tensor}]
  We only prove it for $1\le p\le 2$, which is sufficient since $\esp|g|^p\ge(\esp|g|^2)^{p/2}$ for $p\ge2$. We first apply Khinchine's inequality with respect to the random parameter associated to $(r_k)$:
  $$\esp\left|\sum_{j,k}a_{j,k}r_jr_k\right|^p\gtrsim\esp_1\left(\sum_k\left|\sum_ja_{j,k}r_j\right|^2\right)^{p/2},$$
  where $\esp_1$ denotes the expectation with respect to the random parameter associated to $(r_j)$. Since $p/2\le1$, we may apply the reverse Minkowski inequality\footnote{Stating that $\norm{\sum_k f_k}_{L^{p/2}}\ge\sum_k\norm{f_k}_{L^{p/2}}$ for any $f_k\ge0$.} to infer that
  $$\esp_1\left(\sum_k\left|\sum_ja_{j,k}r_j\right|^2\right)^{p/2}\ge\left(\sum_k\left(\esp_1\left|\sum_ja_{j,k}r_j\right|^p\right)^{2/p}\right)^{p/2}.$$
  Using a second time Khinchine's inequality leads to 
  $$\left(\sum_k\left(\esp_1\left|\sum_ja_{j,k}r_j\right|^p\right)^{2/p}\right)^{p/2}\gtrsim\left(\sum_{j,k}|a_{jk}|^2\right)^{p/2}.$$
\end{proof}

From this tensorized Khinchine inequality, we deduce one side of the desired inequality.

\begin{lemma}\label{lemma:weak-ineq}
 Let $(\lambda_k)\subset\R_+$ a finite sequence of coefficients and $(u_k)$ a finite sequence in $L^{2p}(\R^d)$. Then, we have 
 \begin{equation}
  \norm{\sum_{j,k}\lambda_k|P_ju_k|^2}_{L^p}\lesssim\norm{\sum_k\lambda_k|u_k|^2}_{L^p},
 \end{equation}
 for all $1/2<p<\ii$, where the implicit constant is independent of $(\lambda_k)$, $(u_k)$.
\end{lemma}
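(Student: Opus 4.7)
The plan is to linearize the pointwise square sum via Khinchine-type inequalities applied twice, and in between insert a uniform Mikhlin bound on a randomized Fourier multiplier. Apply Lemma \ref{lemma:khinchine-tensor} pointwise in $x$ at exponent $2p$ with coefficients $a_{j,k}(x) = \sqrt{\lambda_k}\,P_j u_k(x)$ and two independent Rademacher sequences $(r_j)$, $(\epsilon_k)$; this gives
$$\left(\sum_{j,k} \lambda_k\, |P_j u_k(x)|^2\right)^p \lesssim \esp \left| \sum_{j,k} \sqrt{\lambda_k}\, r_j \epsilon_k\, P_j u_k(x)\right|^{2p},$$
which requires only $2p\ge 1$, weaker than the range stated in the lemma.

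The crucial observation is that the inner double sum factorizes as $T_r\big(\sum_k \sqrt{\lambda_k}\,\epsilon_k u_k\big)(x)$, where $T_r$ denotes the Fourier multiplier with symbol $m_r(\xi) := \sum_j r_j \psi_j(\xi)$. Integrating over $x$ and using Fubini, the problem reduces to an $L^{2p}$ bound on $T_r$ that is uniform in the realization of $(r_j)$. Since $(\psi_j)_{j\in\Z}$ is a dyadic partition of unity on $\R^d\setminus\{0\}$, at each $\xi$ only a bounded number of $\psi_j(\xi)$ are nonzero, and the natural scaling of derivatives of $\psi_j$ means $m_r$ satisfies Mikhlin's hypothesis with a constant depending only on $\psi$. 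Hence, for $1 < 2p < \ii$ (precisely where the assumption $p > 1/2$ enters),
$$\esp_r \int_{\R^d} |T_r v(x)|^{2p}\,dx \lesssim \int_{\R^d} |v(x)|^{2p}\,dx,$$
uniformly in $v$. Applying this with $v = \sum_k \sqrt{\lambda_k}\,\epsilon_k u_k$ and then taking $\esp_\epsilon$ yields $\esp_{\epsilon}\int |\sum_k \sqrt{\lambda_k}\,\epsilon_k u_k(x)|^{2p}\,dx$.

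It remains to undo the second Rademacher average: by the classical one-dimensional Khinchine inequality in the upper direction (valid for $2p\ge 1$), applied pointwise in $x$,
$$\esp_{\epsilon}\Big|\sum_k \sqrt{\lambda_k}\, \epsilon_k u_k(x)\Big|^{2p} \lesssim \Big(\sum_k \lambda_k |u_k(x)|^2\Big)^p,$$
and integration recovers the $p$-th power of $\norm{\sum_k\lambda_k|u_k|^2}_{L^p}$. Taking $p$-th roots gives the stated inequality. The only genuinely analytic step, and therefore the main obstacle, is verifying that the Mikhlin constant of $m_r$ is independent of the signs $(r_j)$; everything else is soft manipulation of Rademacher averages. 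Note that the one-sided nature of this estimate is consistent with Lemma \ref{lemma:khinchine-tensor} being used only in one direction, which is why no dual multiplier $\tilde{P}_j$ is needed here.
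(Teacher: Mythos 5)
Your argument is essentially the paper's own proof: tensorized Khinchine to linearize the square function, the uniform Mikhlin bound for the randomized multiplier $\sum_j r_j\psi_j$ on $L^{2p}$ (which is exactly where $p>1/2$ enters), and a final Khinchine step to recover $\big(\sum_k\lambda_k|u_k|^2\big)^p$. The only cosmetic difference is your use of two independent Rademacher families $(r_j)$, $(\epsilon_k)$ instead of the products $r_jr_k$ of a single family as in Lemma~\ref{lemma:khinchine-tensor}; since the frequency and particle indices are distinct, this is the same estimate and the same proof.
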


\begin{proof}
 By Lemma \ref{lemma:khinchine-tensor},
 $$\norm{\sum_{j,k}\lambda_k|P_ju_k|^2}_{L^p}^p\lesssim\esp\int_{\R^d}\left|\sum_{j,k}\lambda_k^{1/2}r_jr_kP_ju_k(x)\right|^{2p}\,dx.$$
 By the boundedness of the Fourier multiplier by $\xi\mapsto\sum_j r_j\psi_j(\xi)$ on $L^{2p}$, we have 
 $$\esp\int_{\R^d}\left|\sum_{j,k}\lambda_k^{1/2}r_jr_kP_ju_k(x)\right|^{2p}\,dx\lesssim\esp\int_{\R^d}\left|\sum_k\lambda_k^{1/2}r_ku_k(x)\right|^{2p}\,dx.$$
 Applying again Khinchine's inequality, we have
 $$\int_{\R^d}\esp\left|\sum_k\lambda_k^{1/2}r_ku_k(x)\right|^{2p}\,dx\lesssim\int_{\R^d}\left(\sum_k\lambda_k|u_k(x)|^2\right)^p\,dx.$$
\end{proof}

The other side of the inequality uses Lemma \ref{lemma:weak-ineq}.

\begin{lemma}
 Let $(\lambda_k)\subset\R_+$ a finite sequence of coefficients and $(u_k)$ a finite 
 set of functions in $L^{2p}(\R^d)$. Then, we have
 \begin{equation}\label{eq:LP-sequences-2nd}
    \norm{\sum_k\lambda_k|u_k|^2}_{L^p}\lesssim\norm{\sum_{j,k}\lambda_k|P_ju_k|^2}_{L^p},
 \end{equation}
  for all $1/2<p<\ii$, where the implicit constant is independent of $(\lambda_k)$, $(u_k)$. 
\end{lemma}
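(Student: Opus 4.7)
The plan is to reduce matters to an $\ell^2$-valued form of the classical reverse Littlewood--Paley inequality, where duality remains available even though $p$ may lie below $1$. First I would introduce the $\ell^2$-valued function $F(x):=(\lambda_k^{1/2}u_k(x))_k$ on $\R^d$, so that $\sum_k\lambda_k|u_k|^2=\norm{F}_{\ell^2}^2$ and $\sum_{j,k}\lambda_k|P_ju_k|^2=\sum_j\norm{P_jF}_{\ell^2}^2$ (with $P_j$ acting componentwise). The target inequality \eqref{eq:LP-sequences-2nd} rewrites as
$$\norm{F}_{L^{2p}(\R^d;\ell^2)}^2\lesssim\norm{\Bigl(\sum_j\norm{P_jF}_{\ell^2}^2\Bigr)^{1/2}}_{L^{2p}(\R^d)}^2,$$
and since $p>1/2$ gives $2p>1$, the duality $(L^{2p}(\R^d;\ell^2))^\ast=L^{(2p)'}(\R^d;\ell^2)$ is at our disposal.

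I would next test against $v=(v_k)_k\in L^{(2p)'}(\R^d;\ell^2)$ of unit norm. For each component I would invoke the identity already used at the start of the section for the reverse direction of the classical result,
$$\int_{\R^d}u_k\,\bar{v_k}\,dx=\sum_{j\in\Z}\int_{\R^d}P_ju_k\,\overline{\tilde P_jv_k}\,dx,$$
where $\tilde P_j$ is a fattened multiplier with $\tilde P_jP_j=P_j$. Multiplying by $\lambda_k^{1/2}$, summing over $k$, applying pointwise Cauchy--Schwarz in the joint index $(j,k)$, and then Hölder in $L^{2p}\times L^{(2p)'}$ would yield
$$\biggl|\int_{\R^d}\pscal{F,v}_{\ell^2}\,dx\biggr|\le\norm{\Bigl(\sum_{j,k}\lambda_k|P_ju_k|^2\Bigr)^{1/2}}_{L^{2p}}\norm{\Bigl(\sum_{j,k}|\tilde P_jv_k|^2\Bigr)^{1/2}}_{L^{(2p)'}}.$$
The first factor is exactly $\norm{\sum_{j,k}\lambda_k|P_ju_k|^2}_{L^p}^{1/2}$, which is the desired right-hand side, so the whole argument reduces to bounding the second factor by a constant independent of $v$.

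For the second factor I would set $q:=(2p)'/2=p/(2p-1)$; a short computation shows $q\in(1/2,\infty)$ whenever $p\in(1/2,\infty)$, so Lemma \ref{lemma:weak-ineq} applies to $(v_k)_k$ with coefficients $\mu_k\equiv 1$, exponent $q$, and $\tilde P_j$ in place of $P_j$. The proof of that lemma uses only the $L^{2q}$-boundedness of the randomized Fourier multiplier, which carries over verbatim for $\tilde\psi$ (chosen smooth, supported away from the origin, with finite overlap). This yields
$$\norm{\Bigl(\sum_{j,k}|\tilde P_jv_k|^2\Bigr)^{1/2}}_{L^{(2p)'}}^2\lesssim\norm{\Bigl(\sum_k|v_k|^2\Bigr)^{1/2}}_{L^{(2p)'}}^2=\norm{v}_{L^{(2p)'}(\R^d;\ell^2)}^2=1.$$
Taking the supremum over $v$ and squaring would complete the proof.

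The principal obstacle is the subunital range $1/2<p<1$, where no direct scalar $L^p$--$L^{p'}$ duality is available. The $\ell^2$-valued reformulation circumvents this by running the duality at the level of $L^{2p}(\ell^2)$ with $2p>1$; this is precisely what dictates the threshold $p>1/2$ in the statement.
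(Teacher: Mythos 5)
Your proof is correct and follows essentially the same route as the paper: the identity with the fattened multipliers $\tilde{P_j}$, pointwise Cauchy--Schwarz in the joint index $(j,k)$, H\"older against the exponent $p/(2p-1)$, and a reuse of Lemma \ref{lemma:weak-ineq} (which indeed extends verbatim to $\tilde{P_j}$) on the dual factor. The only difference is packaging: you invoke abstract $L^{2p}(\R^d;\ell^2)$--$L^{(2p)'}(\R^d;\ell^2)$ duality over a general test family $(v_k)$, whereas the paper tests against a scalar weight $V\ge0$ and closes by choosing $V=\left(\sum_k\lambda_k|u_k|^2\right)^{p-1}$, which is precisely your extremizing $v$.
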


\begin{remark}
The right side of \eqref{eq:LP-sequences-2nd} is well-defined due to Lemma \ref{lemma:weak-ineq}.
\end{remark}

\begin{proof}
  For any $V\ge0$, we have
  \begin{align*}
    \int_{\R^d}\left(\sum_k\lambda_k|u_k(x)|^2\right) V(x)\,dx &=\sum_k\lambda_k\int_{\R^d}\bar{u_k(x)}V(x)u_k(x)\,dx \\
    &= \sum_{j,k}\lambda_k\int_{\R^d}\bar{P_ju_k(x)}\tilde{P_j}Vu_k(x)\,dx,
  \end{align*}
  where the sequence $\tilde{P_j}$ was defined earlier. By Hölder's inequality,
  \begin{align*}
   \sum_{j,k}\lambda_k\int_{\R^d}\bar{P_ju_k(x)}\tilde{P_j}Vu_k(x)\,dx &\le \int_{\R^d}\left(\sum_{j,k}\lambda_k|P_ju_k(x)|^2\right)^{1/2}\left(\sum_{j,k}\lambda_k|\tilde{P_j}Vu_k(x)|^2\right)^{1/2}\\   
  &\le\norm{\sum_{j,k}\lambda_k|P_ju_k(x)|^2}_{L^p}^{1/2}\norm{\sum_{j,k}\lambda_k|\tilde{P_j}Vu_k(x)|^2}_{L^{p/(2p-1)}}^{1/2}.
  \end{align*}
 By Lemma \ref{lemma:weak-ineq}, using that $p/(2p-1)>1/2$, we have
 $$\norm{\sum_{j,k}\lambda_k|\tilde{P_j}Vu_k(x)|^2}_{L^{p/(2p-1)}}\lesssim\norm{V^2\sum_k\lambda_k|u_k|^2}_{L^{p/(2p-1)}},$$
 which leads to the desired result by choosing $V=(\sum_k\lambda_k|u_k|^2)^{p-1}$. 
\end{proof}

\section{Application: Lieb-Thirring inequalities}\label{sec:LT}

In this section, we explain how to use the Littlewood-Paley decomposition \eqref{eq:LP-density} to provide a simple proof of the Lieb-Thirring inequality. We first compare the Littlewood-Paley decompositions \eqref{eq:LP} and \eqref{eq:LP-density}, and argue why they cannot be used in the same way.

\subsection{Comparison of the two Littlewood-Paley decompositions}\label{sec:comparison}

The Lieb-Thirring inequality generalizes to densities of operators the Gagliardo-Nirenberg-Sobolev inequality
\begin{equation}
 \norm{u}_{L^{2+4/d}}\lesssim\norm{u}_{L^2}^{\frac{2}{d+2}}\norm{\nabla u}_{L^2}^{\frac{d}{d+2}},\ \forall u\in H^1(\R^d).
\end{equation}
This last inequality can be proved very easily using the usual Littlewood-Paley decomposition \eqref{eq:LP}. Indeed, by Hölder's inequality we have
\begin{align*}
  \norm{P_ju}_{L^{2+4/d}} &\le\norm{P_ju}_{L^2}^{\frac{d}{d+2}}\norm{P_ju}_{L^\ii}^{\frac{2}{d+2}}\\
  &\lesssim\norm{P_ju}_{L^2}^{\frac{d}{d+2}}\norm{\cF(P_ju)}_{L^1}^{\frac{2}{d+2}}\\
  &\lesssim 2^{\frac{d}{d+2}j}\norm{P_ju}_{L^2}\\
  &\lesssim\norm{P_ju}_{L^2}^{\frac{2}{d+2}}\norm{\nabla P_j u}_{L^2}^{\frac{d}{d+2}},
\end{align*}
meaning that the Gagliardo-Nirenberg-Sobolev inequality is immediate for frequency-localized functions. To get it for any function, we use the Littlewood-Paley decomposition \eqref{eq:LP} and obtain
\begin{align*}
 \norm{u}_{L^{2+4/d}}^2 &\lesssim\sum_j\norm{P_ju}_{L^{2+4/d}}^2\\
  &\lesssim \sum_j\norm{P_ju}_{L^2}^{\frac{4}{d+2}}\norm{\nabla P_j u}_{L^2}^{\frac{2d}{d+2}}\\
  &\lesssim \left(\sum_j\norm{P_ju}_{L^2}^2\right)^{\frac{2}{d+2}}\left(\sum_j\norm{\nabla P_j u}_{L^2}^2\right)^{\frac{d}{d+2}}\\
  &\lesssim \norm{u}_{L^2}^{\frac{4}{d+2}}\norm{\nabla u}_{L^2}^{\frac{2d}{d+2}}.
\end{align*}
We see here the power of the Littlewood-Paley decomposition: it allows to deduce functional inequalities from their version for frequency-localized functions. This has been used in several contexts, for instance concerning Strichartz inequalities \cite{KeeTao-98,BurGerTzv-04}. In particular, notice that we have used something much weaker than the Littlewood-Paley decomposition, namely the inequality
$$\norm{u}_{L^p}^2\lesssim\sum_j\norm{P_ju}_{L^p}^2,$$
which follows from \eqref{eq:LP} by a triangle inequality. We now explain why the same strategy does not work in the context of the Lieb-Thirring inequality. This inequality reads
$$\tr(-\Delta)\gamma\gtrsim\int_{\R^d}\rho_\gamma(x)^{1+\frac2d}\,dx,$$
for any finite-rank $0\le\gamma\le1$. To see that it is indeed a generalization of the Gagliardo-Nirenberg-Sobolev inequality, notice that it is equivalent to the inequality
$$\int_{\R^d}\left(\sum_{k=1}^N\lambda_k|u_k(x)|^2\right)^{1+\frac2d}\,dx\lesssim\sum_{k=1}^N\lambda_k\int_{\R^d}|\nabla u_k(x)|^2\,dx,$$
for any $(\lambda_k)\subset\R_+$, $(u_k)\subset H^1(\R^d)$ orthonormal in $L^2(\R^d)$, and any $N\ge1$. The usual Gagliardo-Nirenberg-Sobolev inequality thus corresponds to the particular case $N=1$ of the Lieb-Thirring inequality. However, the Lieb-Thirring inequality does not follow from the Gagliardo-Nirenberg-Sobolev and the triangle inequalities, they only imply that 
$$\int_{\R^d}\left(\sum_{k=1}^N\lambda_k|u_k(x)|^2\right)^{1+\frac2d}\,dx\lesssim\left(\sum_{k=1}^N\lambda_k\right)^{\frac2d}\sum_{k=1}^N\lambda_k\int_{\R^d}|\nabla u_k(x)|^2\,dx,$$
which is weaker than the Lieb-Thirring inequality, especially for large $N$. Let us notice that Frank, Lieb, and Seiringer have proved in \cite{FraLieSei-10} an equivalence between the Gagliardo-Nirenberg-Sobolev and (the dual version of) the Lieb-Thirring inequality.

Again, for frequency-localized $\gamma$, this inequality is elementary: the constraint $0\le\gamma\le1$ implies that $0\le P_j\gamma P_j\le P_j^2$ and hence $0\le\rho_{P_j\gamma P_j}(x)\lesssim 2^{dj}$ for all $x\in\R^d$. As a consequence,
$$\norm{\rho_{P_j\gamma P_j}}_{L^{1+2/d}}\le\norm{\rho_{P_j\gamma P_j}}_{L^1}^\frac{d}{d+2}\norm{\rho_{P_j\gamma P_j}}_{L^\ii}^\frac{2}{d+2}\lesssim 2^{\frac{2d}{d+2}j}(\tr P_j\gamma P_j)^\frac{d}{d+2}\lesssim(\tr(-\Delta)P_j\gamma P_j)^{\frac{d}{d+2}},$$
which is exactly the Lieb-Thirring inequality. Here, we used the fact that $\int\rho_\gamma=\tr\gamma$. Using the same idea as in the proof of the Gagliardo-Nirenberg-Sobolev inequality, we find that for any $\gamma$,
$$\norm{\rho_\gamma}_{L^{1+2/d}}\lesssim\sum_j\norm{\rho_{P_j\gamma P_j}}_{L^{1+2/d}}\lesssim\sum_j(\tr(-\Delta)P_j\gamma P_j)^{\frac{d}{d+2}},$$
which we cannot sum. Indeed, the inequality
$$\sum_j(\tr(-\Delta)P_j\gamma P_j)^{\frac{d}{d+2}}\le\left(\sum_j\tr(-\Delta)P_j\gamma P_j\right)^{\frac{d}{d+2}}\sim(\tr(-\Delta)\gamma)^{\frac{d}{d+2}}$$
is of course wrong because $d/(d+2)<1$. We thus see the difference between the applications of the Littlewood-Paley decompositions for functions or for densities of operators: one cannot directly resum the frequency-localized inequalities in the context of operators. Of course, the reason behind it is the use of the rough triangle inequality $\norm{\rho_\gamma}_{L^p}\lesssim\sum_j\norm{\rho_{P_j\gamma P_j}}_{L^p}$, which one should not do for operators. We now explain how to go beyond this difficulty.

\subsection{Proof of the Lieb-Thirring inequality}

Let us prove the Lieb-Thirring inequality using the Littlewood-Paley decomposition for densities. Hence, let $0\le\gamma\le1$ an operator on $L^2(\R^d)$, which we may assume to be of finite rank. Since $1=\sum_j P_j$ with $P_j\ge0$, we deduce that $1\ge\sum_j P_j^2$. We thus have
\begin{equation}\label{eq:LT-first-ineq}
  \tr(-\Delta)\gamma\ge\sum_j\tr\sqrt{\gamma}P_j(-\Delta)P_j\sqrt{\gamma}\gtrsim \sum_j2^{2j}\tr\sqrt{\gamma} P_j^2\sqrt{\gamma}=\int_{\R^d}\sum_j 2^{2j}\rho_{P_j\gamma P_j}(x)\,dx.
\end{equation}

\begin{lemma}\label{lemma:sequences}
 Let $(\alpha_j)_{j\in\Z}$ a sequence of real numbers satisfying $0\le\alpha_j\le 2^{jd}$ for all $j$. Then, we have the inequality
 $$\left(\sum_j\alpha_j\right)^{1+\frac2d}\lesssim\sum_j2^{2j}\alpha_j.$$
\end{lemma}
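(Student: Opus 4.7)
The plan is to split the sum at some threshold $j_0\in\Z$ (to be chosen later) and use the two different upper bounds available: the hypothesis $\alpha_j\le 2^{jd}$ for the low frequencies $j\le j_0$, and the trivial inequality $\alpha_j\le 2^{-2j}(2^{2j}\alpha_j)$ for the high frequencies $j>j_0$. This is the standard ``choose the right scale'' argument in Littlewood-Paley theory.

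More precisely, first I would write
$$\sum_{j\in\Z}\alpha_j=\sum_{j\le j_0}\alpha_j+\sum_{j>j_0}\alpha_j.$$
For the low-frequency piece, summing the geometric series gives
$$\sum_{j\le j_0}\alpha_j\le\sum_{j\le j_0}2^{jd}=\frac{2^{j_0 d}}{1-2^{-d}}\lesssim 2^{j_0 d}.$$
For the high-frequency piece, since $\alpha_j\ge 0$ and $2^{-2j}\le 2^{-2j_0}$ for $j>j_0$, one has termwise
$$\alpha_j=2^{-2j}(2^{2j}\alpha_j)\le 2^{-2j_0}(2^{2j}\alpha_j),$$
so that
$$\sum_{j>j_0}\alpha_j\le 2^{-2j_0}\sum_{j\in\Z}2^{2j}\alpha_j.$$

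Denote $S:=\sum_j 2^{2j}\alpha_j$. The previous two bounds together yield
$$\sum_j\alpha_j\lesssim 2^{j_0 d}+2^{-2j_0}S.$$
The second step is to choose $j_0$ so that the two terms balance, namely $2^{j_0(d+2)}\sim S$. Concretely I would take $j_0\in\Z$ to be the largest integer satisfying $2^{j_0(d+2)}\le S$, which gives $2^{j_0 d}\le S^{d/(d+2)}$ and $2^{-2j_0}\le 4\,S^{-2/(d+2)}$. Substituting back,
$$\sum_j\alpha_j\lesssim S^{d/(d+2)}+S^{-2/(d+2)}\cdot S=(1+\text{const})\,S^{d/(d+2)}.$$
Raising to the power $(d+2)/d=1+2/d$ gives exactly the desired inequality.

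There is really no obstacle beyond organizing the dyadic bookkeeping. The one (mild) subtlety is that the optimization is over integers, which is handled by the observation that choosing $j_0$ as the largest integer with $2^{j_0(d+2)}\le S$ loses only a bounded multiplicative factor; the case $S=0$ forces $\alpha_j=0$ for all $j$, which is trivial, and the case $S=+\infty$ is also trivial.
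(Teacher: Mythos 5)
Your proof is correct and follows essentially the same route as the paper's: split the sum at a dyadic threshold, bound the low frequencies by the geometric series $\sum_{j\le j_0}2^{jd}\lesssim 2^{j_0 d}$ and the high frequencies by $2^{-2j_0}\sum_j 2^{2j}\alpha_j$, then optimize over the threshold. Your explicit choice of $j_0$ and the remarks on the integer optimization and the degenerate cases $S=0,+\infty$ just spell out the paper's final step ``optimizing over $J$ leads to the result.''
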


Let us first notice that the lemma implies the Lieb-Thirring inequality: indeed, since $0\le\gamma\le1$ we deduce that $0\le P_j\gamma P_j\le P_j^2$ and hence $0\le\rho_{P_j\gamma P_j}(x)\lesssim 2^{jd}$ for all $x\in\R^d$. Hence, from the Lemma and \eqref{eq:LT-first-ineq} we deduce that 
$$\tr(-\Delta)\gamma\gtrsim\int_{\R^d}\left(\sum_j\rho_{P_j\gamma P_j}(x)\right)^{1+\frac2d}dx\gtrsim\int_{\R^d}\rho_\gamma(x)^{1+\frac2d}dx,$$
where in the last inequality we used the Littlewood-Paley theorem for densities. Let us now prove the lemma.

\begin{proof}[Proof of Lemma \ref{lemma:sequences}]
 We split the following sum as
 $$\sum_j\alpha_j=\sum_{j\le J}\alpha_j + \sum_{j>J}\alpha_j.$$
 We estimate the first sum using that $0\le\alpha_j\le2^{jd}$:
 $$\sum_{j\le J}\alpha_j\lesssim 2^{dJ},$$
 and the second sum is estimated in the following way:
 $$\sum_{j>J}\alpha_j\le 2^{-2J}\sum_j2^{2j}\alpha_j.$$
 We thus find that for all $J$,
 $$\sum_j\alpha_j\lesssim 2^{dJ}+2^{-2J}\sum_j 2^{2j}\alpha_j.$$
 Optimizing over $J$ leads to the result.
\end{proof}

Of course, the same strategy of proof allows to obtain more general inequalities of the type
$$\tr(-\Delta)^b\gamma\gtrsim\int_{\R^d}\rho_\gamma(x)^{1+\frac{2b}{d+2a}}\,dx,$$
for all $0\le\gamma\le(-\Delta)^a$, with $b\ge0$ and $a>-d/2$. In particular, the case $d\ge3$, $a=-1$, $b=1$ is due to Rumin \cite{Rumin-10} and was shown to be equivalent to the CLR inequality by Frank \cite{Frank-14}. Our method is similar to the one used by Rumin, except that he uses a continuous decomposition
$$-\Delta=\int_0^\ii\1(-\Delta>\tau)\,d\tau$$
instead of a dyadic decomposition coming from Littlewood-Paley. Rumin's method is actually far more powerful when dealing with these kind of inequalities, and was shown to work when replacing $-\Delta$ by general $a(-i\nabla)$ by Frank \cite{Frank-14}. The dyadic decomposition seems useless in these more general cases since it does not distinguish the high/low values of $a$. We expect that the Littlewood-Paley decomposition might be useful when one wants to exploit the ``almost orthogonality'' between the blocks $(P_j)$: we have $P_jP_k=0$, except for finite number of blocks, a phenomenon which does not appear in Rumin's decomposition. This orthogonality might be useful when dealing with higher Schatten spaces $\gS^\alpha$ compared to the trace-class $\gS^1$ which appears for instance in the Lieb-Thirring inequality. We hope to find such applications in the future. 

\medskip

\noindent\textbf{Acknowledgments.} The author is grateful to Rupert Frank for useful discussions. Financial support from the ERC MNIQS-258023 is acknowledged. 

%

\end{document}